\newtheorem*{definition}{Definition}
\newtheorem{proposition}{Proposition}
\newtheorem{remark}{Remark}
\newcommand{\dom}{\text{dom}}
\newcommand{\N}{\mathbb{N}}
\newcommand{\ran}{\text{ran}}
\newcommand{\R}{\mathbb{R}}
\newcommand{\e}{\varepsilon}
\title{Extended Gini index}
\author{Ram Sewak Dubey%
\thanks{Department of Economics, Feliciano School of Business, Montclair State
University, Montclair, NJ 07043; E-mail: dubeyr@montclair.edu} \and %
Giorgio Laguzzi%
\thanks{University of Freiburg in the Mathematical Logic Group at Eckerstr. 1, 79104
Freiburg im Breisgau, Germany; Email: giorgio.laguzzi.1984@gmail.com}}
\date{\today}
\begin{document}
\maketitle

\begin{abstract}
We propose an extended version of Gini index defined on the set of infinite utility streams, $X=Y^\N$ where $Y\subset \R$.
For $Y$ containing at most finitely many elements, the index satisfies the generalized Pigou-Dalton transfer principles in addition to the anonymity axiom.

\noindent \emph{Keywords:} \texttt{Anonymity,}\; \texttt{Extended Gini index,}\; \texttt{Generalized Pigou-Dalton transfer principle,}\;  \texttt{Social welfare function.}

\noindent \emph{Journal of Economic Literature} Classification Numbers: \texttt{C65,} \texttt{D63,}\; \texttt{D71.}
\end{abstract}

\newpage

\section{Introduction}\label{s1}

The Gini index (also referred to as Gini Coefficient, \citet{gini1997} is a widely used measure of inequality in income or wealth distribution in the society.
It is a measure of dispersion of income distribution for a given population and is sensitive to redistribution of income from rich to poor.
In this paper, we propose an extended version of the Gini index as a real valued representation of the infinite utility streams.
The new index satisfies a generalized version of Pigou-Dalton transfer principle and the anonymity axiom.

A brief review of the two equity axioms is as follows.
Anonymity axiom is an example of \emph{procedural equity}.
It applies to the situations where the changes involved in the infinite utility streams do not alter the distribution of utilities.%
\footnote{The idea of anonymity was introduced in a classic contribution, \citet{ramsey1928}, who observed that discounting one generation's utility relative to another's is \enquote{ethically indefensible}, and something that \enquote{arises merely from the weakness of the imagination}. 
\citet{diamond1965} formalized the concept of \enquote{equal treatment} of all generations (present and future) in the form of an \emph{anonymity} axiom on social preferences.} 
The anonymity axiom requires that the society should be indifferent between two streams of well-being, if one is obtained from the other by interchanging the well-being of any pair of generations.

The second equity concept is an example of the \emph{consequentialist equity}.
The particular version considered in this paper is the well-known Pigou-Dalton transfer principle.%
\footnote{The inequality reducing property was initially hinted at by \citet[p. 24]{pigou1912} as \enquote{The Principle of Transfers}.
\citet[p. 351]{dalton1920} described it as \enquote{If there are only two income-receivers and a transfer of income takes place from the richer to the poorer, inequality is diminished.}.
The version of equity axiom described here for the infinite utility streams was introduced and discussed in \citet{sakai2006}, \citet{bossert2007}, and \citet{hara2008}.}
Pigou-Dalton transfer principle compares two infinite utility streams ($x$ and $y$) in which all generations except two have the same utility levels in both utility streams; regarding the two remaining generations (say, $i$ and $j$), if  $y_i < x_i < x_j < y_j$ and $y_i+y_j=x_i+x_j$, then utility stream $x$ is socially preferred to $y$. 
This equity principle ranks utility sequence $x$ superior to $y$ as $x$ is obtained from $y$ by carrying out a non-leaky and non-rank-switching transfer of welfare from a rich to a poor generation.

It is easy to infer that this definition would also help us in ranking utility sequence $x$ superior to $y$ if $x$ is obtained from $y$ by carrying out  non-leaky and non-rank-switching transfers of welfare among any finitely many pairs of generations.
To enable us to compare sequence $x$ and $y$ when $x$ is obtained from $y$ by carrying out arbitrarily many (possibly infinitely many) pairs of rich and poor generations, we have introduced a generalized version of the Pigou-Dalton transfer principle in \citet{dubey2020a}.
It has been shown that the generalized infinite Pigou-Dalton transfer principle admits real-valued representation if and only if $Y$ does not contain more than seven distinct elements.

In this short note we consider a weaker version of the generalized Pigou-Dalton principle and show (in Proposition \ref{P1}) that an index defined along the lines of the Gini index, which we call the \emph{extended Gini index}, satisfies the Pigou-Dalton principle in addition to the anonymity axiom when $Y$ contains finitely many distinct elements.

The rest of the paper is organized as follows.
Section \ref{s2} contains the relevant definitions and in section \ref{s3} we discuss the generalized Pigou-Dalton transfer principle which is the focus of our paper.
Section \ref{s4} contains the two results with detailed proofs.
We conclude in section \ref{s5}.

\section{Preliminaries}\label{s2}
\subsection{Notations and definitions}\label{s21}
Let $\R$ and $\N$ be the sets of real numbers and natural numbers respectively. 
For all $y, z\,\in \R^{\N}\,$, we write $y\geq z$ if $y_{n}\geq z_{n}$, for all $n\in \N$; we write $y>z$ if $y\geq z$ and $y\neq z$; and we write $y\gg z$ if $y_{n}>z_{n}$ for all $n\in \N$.

A \emph{partial function} $f: X \rightarrow Y$ is a function from a subset $S$ of $X$ to $Y$. 
If $S$ equals $X$, the partial function is said to be total.
Domain and range of function $f$ are denoted by $\dom(f)$ and $\ran(f)$ respectively.
\begin{definition}
\emph{A partial function $\alpha: \N \rightarrow \N$ is called a \emph{pairing function} if and only if $\alpha$ satisfies $\forall n \in \dom(\alpha)$, $\alpha(\alpha(n))=n$.}
\end{definition}
Note that $\dom(\alpha)=\ran(\alpha)$ for every pairing function $\alpha$. 
We denote the set of all pairing functions by $\Pi$.

\subsubsection{Density for subsets of natural numbers}
Let  $|S|$ denote the cardinality of the finite set $S\subset \N$.
The \emph{lower asymptotic density} of $S\subset \N$ is defined as:
\begin{equation*}
\underline{d} (S) = \underset{n\rightarrow \infty}{\liminf}\; \frac{|S\cap\{1, 2, \cdots, n\}|}{n}.
\end{equation*}
Likewise, the \emph{upper asymptotic density} of $S\subset \N$ is defined as:
\begin{equation*}
\overline{d} (S) = \underset{n\rightarrow \infty}{\limsup}\; \frac{|S\cap\{1, 2, \cdots, n\}|}{n}.
\end{equation*}
If the two coincide for a set $S\subset\N$, it is called the \emph{asymptotic density} of set $S$, $d(S)$.

\subsubsection{Social Welfare Order}
Let $Y$, a non-empty subset of $\R$, be the set of all possible utilities that any generation can achieve. 
Then $X\equiv Y^{\N}$ is the set of all possible utility streams. 
If $\langle x_{n}\rangle \;\in \;X$, then $\langle x_{n}\rangle = (x_{1}, x_{2}, \cdots)$, where, for all $n\in \N$, $x_{n}\in Y$ represents the amount of utility that the generation of period $n$ earns.
We consider transitive binary relations on $X$, denoted by $\succsim$ and called \emph{social welfare relation} (SWR), with symmetric and asymmetric parts denoted by $\sim$ and $\succ$ respectively, defined in the usual way. 
A \emph{social welfare order} (SWO) is a complete social welfare relation.
A \emph{social welfare function} (SWF) is a mapping $W:X\rightarrow \R$. 
Given a SWO $\succsim$ on $X$, we say that $\succsim$ can be \emph{represented} by a real-valued function if there is a mapping $W:X\rightarrow \R$ such that for all $x, y\in X$, we have $x\succsim y$ if and only if $W(x)\geq W(y)$.

\subsubsection{Equity axioms}
The following axioms on social welfare orders are used in the analysis. 
\begin{definition}
\emph{(Anonymity - AN):\quad 
If $x,y\;\in \;X$, and if there exist $i,j\;\in \;\N$ such that $x_{i} = y_{j}$ and $x_{j} = y_{i}$, and for every $k\in \N\setminus \{i,j\}$, $x_{k} = y_{k}$, then $x\sim y$.}
\end{definition}

\begin{definition}
\emph{(Pigou-Dalton transfer principle - PD): \quad 
If $x,y\in X$, and there exist $i,j\in \N$ and $\varepsilon>0$, such that $x_{i}+ \e = y_{i} < y_{j} = x_{j}-\e$, while $y_{k} = x_{k}$ for all $k\in \N\setminus \{i,j\}$, then $y\succ x$.}
\end{definition}
Anonymity is an example of procedural equity.
Pigou-Dalton transfer principle is an example of consequentialist equity.
\section{Generalized Pigou-Dalton transfer principles}\label{s3}

The key observation and reason of the present paper for studying some extended versions of these consequentialist equity principles is motivated by the following observation.
Given a set of utilities $Y:= \{1, 2, 3, 4, 5\}\subset\R$, consider the two infinite streams  
\[
x:= \langle 2, 3, 5, 2, 3, 5,  2, 3, \cdots \rangle, \;\text{and}\; y:= \langle 1, 4, 5, 1, 4, 5, 1, 4, \cdots \rangle.
\] 
Following the expected interpretation of a redistributive equity principle, we should be able to always rank $y \prec x$. 
In the finite case, PD together with transitivity is sufficient to secure such a ranking, but in the infinite case transitivity cannot be extended to infinite chains. 
Hence PD even with transitivity is not a sufficient condition to secure the desired ranking $y \prec x$, and so an extension is necessary.

\begin{definition}
\emph{(Generalized Pigou-Dalton, GPD):\quad 
Given $x, y \in X$ if there is $\alpha \in \Pi$ such that for every $j \notin \dom(\alpha), x_j=y_j$, and for every $i \in \dom(\alpha)$ there is $\varepsilon_i >0$ such that
\[
\text{either}\; y_i + \varepsilon_i = x_i < x_{\alpha(i)} = y_{\alpha(i)} - \varepsilon_i \; \text{or}\; y_{\alpha(i)} +\varepsilon_i < x_{\alpha(i)} < x_i < y_i - \varepsilon_i,\;\text{then}\; y \prec x.
\]}
\end{definition}

In \citet{dubey2020a}, we have investigated the existence and representation of these generalized equity principles. 
The results show that when we do not put any further restriction to the combinatorial characteristics of the pairing function, representation of SWRs satisfying those principles is rather demanding and hard to obtain. 
This leads us to investigate more restricted and weaker forms of  generalized Pigou-Dalton transfer.    
\underline{A first line} of weaker variants of GPD is given by imposing some combinatorial restrictions on the pairing functions as explained in what follows. 
Consider two streams $x, y \in Y^\N$, with $Y=\{1, 2, 3, 4\}$:
\[
x(n)= \left\{ 
\begin{array}{ll}
1 & \exists k \in \N (n=10^k) \\
4 & \text{else} \\
\end{array}
\right.
\quad
y(n)= \left\{ 
\begin{array}{ll}
2 & \text{if}\; x(n) = 1 \\
3 & \text{else.} \\
\end{array}
\right.
\] 
By GPD we get $x \prec y$ even if the welfare improving re-distributions, from $(1, 4)$ in $x$ to $(2, 3)$ in $y$ occurs via a pairing function $\alpha$ such that for every $N \in \N$ there exists $n \in \dom(\alpha)$ such that $|n - \alpha(n)|> N$. 
In other words, the distances between the generations linked by the pairing function $\alpha$ grows in an unbounded manner. 
To avoid this feature, we can require the pairing function $\alpha$ having some limitations in term of being bounded, in a similar fashion as anonymity could require some restrictions on the family of permutations. In this paper we focus on the following types of pairing functions.
\begin{definition} \label{Def1} 
\emph{We say that $\alpha \in \Pi$ is a \emph{fixed-step pairing function} ($\alpha \in \Pi^s$) if and only if there exists $h \in \N$ (called the \emph{step of $\alpha$}) such that $\forall n \in \N \forall k \in ((n-1) h, n h]$, one has $\alpha(k) \in ((n-1) h, n h]$.}
\end{definition}
One can then introduce the following weakening of GPD.

\begin{definition}
\emph{(Fixed-step Generalized Pigou-Dalton, s-GPD):\quad 
There exists $h \in \N$ such that for every $x, y \in X$ if there is $\alpha \in \Pi^s$ with step $h$ such that for every $j \notin \dom(\alpha), x_j=y_j$, and for every $i \in \dom(\alpha)$ there exists $\e_i>0$ such that:}
\[
\emph{ either } y_i + \e_i= x_i < x_{\alpha(i)} = y_{\alpha(i)} -\e_i \emph{ or } y_{\alpha(i)} +\e_i = x_{\alpha(i)} < x_i = y_i - \e_i,
\]
\emph{then $y \prec x$.}
\end{definition}
Note that GPD $\Rightarrow$ s-GPD. 
\underline{A second line} works as follows. 
Since we are considering an infinite time horizon, the ranking induced by $\prec$ should be sensitive not only to few changes, but to a number of changes as large as possible. 
For instance, in an infinite setting one could require that the number of individuals/generations linked via the pairing function (where we can appreciate a reduction of inequality) should be at least an infinite set, and possibly with some non-zero density. 
This is in line with the weaker forms of Pareto principles extensively studied in the literature, such as infinite Pareto, asymptotic Pareto and weak Pareto. 
The following definitions capture this relevant idea for our study.

\begin{definition}
\begin{itemize}
\item \emph{(Infinite Pigou-Dalton, IPD):\quad 
Given $x, y \in X$ if there is $\alpha \in \Pi$ such that $\dom(\alpha)$ is infinite, for every $j \notin \dom(\alpha), x_j=y_j$, and for every $i \in \dom(\alpha)$ one has
\[
\text{ either } y_i + \e_i= x_i < x_{\alpha(i)} = y_{\alpha(i)} -\e_i \text{ or } y_{\alpha(i)} +\e_i = x_{\alpha(i)} < x_i = y_i - \e_i,\,\text{then}\; y \prec x.
\]}
\item \emph{(Asymptotic Pigou-Dalton, APD):\quad
Given $x, y \in X$ if there is $\alpha \in \Pi$ such that $d(\dom(\alpha))>0$, for every $j \notin \dom(\alpha), x_j=y_j$, and for every $i \in \dom(\alpha)$ one has
\[
\text{ either } y_i + \e_i= x_i < x_{\alpha(i)} = y_{\alpha(i)} -\e_i \text{ or } y_{\alpha(i)} +\e_i = x_{\alpha(i)} < x_i = y_i - \e_i,\,\text{then}\; y \prec x.
\]}
\item \emph{(Weak Pigou-Dalton, WPD):\quad
Given $x, y \in X$ if there is $\alpha \in \Pi$ such that $\dom(\alpha)=\N$, for every $j \notin \dom(\alpha), x_j=y_j$, and for every $i \in \dom(\alpha)$ one has
\[
\text{ either } y_i + \e_i= x_i < x_{\alpha(i)} = y_{\alpha(i)} -\e_i \text{ or } y_{\alpha(i)} +\e_i = x_{\alpha(i)} < x_i = y_i - \e_i,\,\text{then}\; y \prec x.
\]}
\end{itemize}
\end{definition}
Note that GPD $\Rightarrow$ IPD $\Rightarrow$ APD $\Rightarrow$ WPD.
In \citet{dubey2020a} we focus on GPD, IPD and WPD, whereas in this paper we focus on versions of APD. 

\begin{remark}
\emph{Note that we can then easily combine the two types of weaker variants, and obtain principles like s-APD, where we require both that the pairing function $\alpha \in \Pi^s$ and that $d(\dom(\alpha))>0$.}
\end{remark}

In \citet[Lemma 2]{dubey2020a}, we have proven that if the utility domain rules out certain types of order-structure one can always prove the existence of SWR satisfying GPD (also combined with AN and M). 
Since all the variants we are considering in this context are weaker versions of GPD, those results suffice to obtain the existence of corresponding SWRs. 
In the next section we study more deeply exactly the various cases and understand when combinations of these generalized equity principles with AN are representable and when, on the contrary, they lead to impossibility of representation by social welfare functions. We present a positive and a negative result, and discuss further developments we would tackle in future research.

\section{Representation of fixed-step asymptotic equitable social welfare relations}\label{s4}
In \citet{dubey2020a} we have proven that any SWO satisfying IPD is not representable when the utility domain $Y$ has at least eight elements. 
A careful scrutiny of the proof shows that actually non-representability persists even if we weaken IPD to APD, since the pairing functions defined in that proof actually has domain with strictly positive density. 
But what is crucial regarding those pairing functions is that they do not satisfy any particular characteristics in line with Definition \ref{Def1}. 
The following result shows that if we put some restrictions on the structure of the pairing functions, namely fixed-step with asymptotic density $>0$, then we obtain an elegant social welfare function, which recalls an extended infinite version of the well-known Gini index.

\begin{proposition}\label{P1}
Let $X= Y^{\N}$ where $|Y|<\infty$ and $Y \subseteq [0,1]$. 
Then there exists a social welfare function $W: X \rightarrow \R$ satisfying fixed step asymptotic Pigou-Dalton (s-APD) and anonymity (AN) axioms.
\end{proposition}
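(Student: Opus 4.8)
The plan is to build $W$ from a shift/permutation-stable averaging functional together with a Gini-type dispersion term evaluated on value-densities. Fix a Banach limit $\mathrm{LIM}$ on $\ell^\infty$ (equivalently a limit $\lim_{\mathcal U}$ along a free ultrafilter $\mathcal U$ on $\N$); since $Y$ is finite and $Y\subseteq[0,1]$, every sequence below is bounded, so the functional is defined on all of $X$. For $x\in X$ and $v\in Y$ set
\[
p_v(x)=\mathrm{LIM}_N\;\frac{|\{n\le N:x_n=v\}|}{N},
\]
the generalized density of the value $v$ in $x$. I then define the extended Gini index and the candidate welfare function by
\[
G(x)=\tfrac12\sum_{v,w\in Y}p_v(x)\,p_w(x)\,|v-w|,\qquad W(x)=-\,G(x).
\]
Being real-valued, $W$ automatically induces a complete SWO represented by it (one could equally take $W(x)=\mathrm{LIM}_N\tfrac1N\sum_{n\le N}x_n-G(x)$, the extra mean term being irrelevant below since s-APD preserves it).

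Anonymity is immediate: swapping two coordinates alters the counting sequence $N\mapsto|\{n\le N:x_n=v\}|$ at only finitely many $N$, so each $p_v$ is invariant under transpositions; hence $G$ and $W$ are unchanged and $x\sim y$. Note also $\sum_{v\in Y}p_v(x)=1$ by linearity of $\mathrm{LIM}$, so the $p_v$ genuinely form a distribution.

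The substantive part is s-APD. Suppose $x$ arises from $y$ through a fixed-step pairing $\alpha\in\Pi^s$ of step $h$ with $d(\dom\alpha)>0$ realising balanced, rank-preserving transfers. Since $Y$ is finite there are only finitely many \emph{transfer types} $(a,b,\e)$, meaning $y_i=a$, $y_{\alpha(i)}=b$, $x_i=a+\e<b-\e=x_{\alpha(i)}$. For each type let $\rho_{a,b,\e}=\mathrm{LIM}_N(\#\{\text{type-}(a,b,\e)\text{ pairs inside }[1,N]\})/N$. Here the fixed-step hypothesis is essential: because every pair lies inside a single block $((k-1)h,kh]$, at most $h$ indices of $[1,N]$ sit in an unfinished pair, so these boundary terms contribute $O(h/N)\to0$ and the counting is clean — exactly the control that plain APD lacks. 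Comparing value-counts of $x$ and $y$ (which differ only on $\dom\alpha$) gives the bookkeeping
\[
p_v(x)-p_v(y)=\sum_{(a,b,\e)}\rho_{a,b,\e}\bigl(\mathbf 1_{\{v=a+\e\}}+\mathbf 1_{\{v=b-\e\}}-\mathbf 1_{\{v=a\}}-\mathbf 1_{\{v=b\}}\bigr),
\]
and $\sum_{(a,b,\e)}2\rho_{a,b,\e}=d(\dom\alpha)>0$, so at least one $\rho_{a,b,\e}$ is strictly positive.

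Finally I would conclude by a strict-decrease argument. The displayed shift of the density vector is a nonnegative superposition of elementary Pigou–Dalton moves $a\mapsto a+\e$, $b\mapsto b-\e$ applied to the distribution $(p_v)$, so the value distribution of $x$ is a mean-preserving contraction of that of $y$; since the Gini mean difference $G$ is a dispersion measure strictly respecting the Pigou–Dalton principle (strictly Schur-convex on the simplex over the distinct points of $Y$), and since at least one move is genuine (some $\rho_{a,b,\e}>0$ with $a+\e<b-\e$), we obtain $G(x)<G(y)$, i.e. $W(x)>W(y)$ and hence $y\prec x$. The main obstacle is precisely this last strict inequality: one must verify that the Gini dispersion drops \emph{strictly} in the density/generalized-limit formulation rather than merely weakly, which reduces to strict Schur-convexity of $G$ together with the observation that all transfer types push inequality in the same direction and so cannot cancel. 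A robust fallback, should strictness of the Gini form prove delicate, is to replace $G$ by $-\mathrm{LIM}_N\tfrac1N\sum_{n\le N}f(x_n)$ for a fixed strictly concave $f$: each pair then yields a strictly positive gain $f(x_i)+f(x_{\alpha(i)})-f(y_i)-f(y_{\alpha(i)})\ge\delta>0$ with a uniform $\delta$ over the finitely many types, and fixed-step plus $d(\dom\alpha)>0$ force the averaged gain to exceed $\tfrac{\delta}{2}d(\dom\alpha)>0$, giving $W(x)>W(y)$ directly.
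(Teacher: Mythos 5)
Your argument is correct, but it takes a genuinely different route from the paper's. The paper never passes to value-densities: it defines $W(x)=-\liminf_{N} H_N^{-2}\sum_{k,j\le H_N}|x(k)-x(j)|$ directly and proves the finite-horizon inequality $\sum_{k,j\le H_N}|x_k-x_j|\ \ge\ \sum_{k,j\le H_N}|y_k-y_j|+\tfrac{2}{5}\,|\dom(\alpha)\cap[1,H_N]|^2$ by a casewise combinatorial comparison of the five pair-combinations $\{k,j\}$, $\{k,\alpha(j)\}$, $\{j,\alpha(k)\}$, $\{\alpha(k),\alpha(j)\}$, $\{j,\alpha(j)\}$ with explicit sign bookkeeping (the $\mu_0,\dots,\mu_3$ analysis), and then concludes via superadditivity of $\liminf$ and $\underline{d}(\dom(\alpha))>0$. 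You instead compress each stream into its asymptotic value-distribution $(p_v)_{v\in Y}$ via a Banach limit and reduce s-APD to a finite-dimensional fact about the Gini mean difference under Pigou--Dalton moves; your placement of the fixed-step hypothesis (making the pair-counting exact up to an $O(h/N)$ boundary error) matches precisely where the paper needs it, and is indeed what plain APD lacks. The strictness step you flag does go through: writing $G(p)=\int F(1-F)\,dt$ for the CDF $F$, the simultaneous transfers replace $F$ by $F+g$ with $g=\sum_{(a,b,\varepsilon)}\rho_{a,b,\varepsilon}\bigl(\mathbf{1}_{[b-\varepsilon,b)}-\mathbf{1}_{[a,a+\varepsilon)}\bigr)$, and the exact identity $G(p')-G(p)=\int g(1-2F)\,dt-\int g^2\,dt$ has each type contributing nonpositively to the first integral (monotonicity of $F$ together with $a+\varepsilon<b-\varepsilon$), while $\int g^2>0$ because $g<0$ on a nondegenerate interval just above the least source value $a^*$ among types with $\rho>0$ (a positive bump $[b'-\varepsilon',b')$ reaching down that far would force $a'<a^*$, and $Y$ finite gives a uniform lower bound on the $\varepsilon$'s); so no cancellation can occur and $G$ drops strictly. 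Your concave-$f$ fallback is likewise airtight. The trade-off between the two proofs: the paper's $W$ is an explicit formula requiring no choice principle --- a feature the authors stress for policy use, and their $\liminf$-based estimate only needs $\underline{d}(\dom(\alpha))>0$ --- whereas your Banach limit/ultrafilter invokes AC; in exchange you get a conceptually cleaner argument that isolates the real content (a mean-preserving contraction strictly lowers the Gini mean difference of the value distribution) and avoids the paper's delicate case analysis. Note you could recover constructiveness in your fallback by replacing $\mathrm{LIM}$ with $\liminf$ and superadditivity, exactly as the paper does.
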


\begin{proof}
We present the result for $Y:=\{1, 2, \cdots, M\}$, i.e., we get $W: Y^{\N} \rightarrow \R$ satisfying s-APD and AN.   
Let $h \in \N$ be the fixed step, $I_n:= ((n-1)h, nh]$ for $n \in \N$ and $H_n := \sup I_n$. 
Define
\begin{equation}\label{P8E1}
W_N(x):=  \frac{1}{H_N^2} \sum_{k=1}^{H_N} \sum_{j=1}^{H_N} |x(k) - x(j)|,
\end{equation}
and 
\begin{equation}\label{P8E2}
W(x):= -\liminf_{N\rightarrow \infty} \; W_N(x),
\end{equation}
We claim $W$ satisfies AN and s-APD.  
Anonymity is  trivial to show.
We need to prove $W$ satisfies s-APD.
Let $x, y \in Y^\N$ be such that there exists $\alpha \in \Pi^s$  with $d(\dom(\alpha))>0$ and 
\begin{itemize}
\item for all $k \in \dom(\alpha)$, either $x_k < y_k < y_{\alpha(k)} < x_{\alpha(k)}$ or $x_{\alpha(k)} < y_{\alpha(k)} < y_k < x_k$; 
\item for all $k \notin \dom(\alpha)$, $x_k=y_k$.
\end{itemize}
In order to show $W(x) < W(y)$, we first need to compare 
\[
\sum_{k=1}^{H_N} \sum_{j=1}^{H_N} |x_k - x_j| \quad \text{and} \quad \sum_{k=1}^{H_N} \sum_{j=1}^{H_N} |y_k - y_j|.
\]
The choice of $x$ and $y$ reveals that, for every $n \in \N$,
\[
\begin{split}
\sum_{k=1}^{H_n} \sum_{j=1}^{H_n} |x(k) - x(j)| \geq \sum_{k=1}^{H_n} \sum_{j=1}^{H_n} |y(k)-y(j)| + \frac{2}{5} |\dom(\alpha) \cap [1,H_n]| \cdot  |\dom(\alpha) \cap [1,H_n]|. \\
\end{split}
\] 
To show that, we can proceed by an inductive argument on all pairs $(k,j) \in H_n \times H_n$, by computing the values $|x_k - x_j|$'s compared to $|y_k-y_j|$'s. 
For every $j, k \in H_N$ we have four possible non-trivial cases: 
\begin{enumerate}[1)]
\item{$k = \alpha(j)$ or $j=\alpha(k)$:  
Then $\e_j=\e_k$ and trivially $|x_j-x_k| = |y_j-y_k| + 2\e_j$.}
\item{$x_k = y_k$ and $x_j = y_j$: 
Then $|x_k - x_j|=|y_k-y_j|$.}
\item{$x_j = y_j$ and $y_k < x_k$:
For $\{j,  k \}$ and $\{ j,\alpha(k)\}$ and note that $|x_j - x_k|+ |x_j-x_{\alpha(k)}| = |y_j-y_k| + |y_j-y_{\alpha(k)}| + \varepsilon_k - \e_k= |y_j-y_k| + |y_j-y_{\alpha(k)}|$.}
\item{$y_k < x_k$ and $x_j < y_j$:
We compute the values given by $\{k, j\}$, $\{k,\alpha(j)\}$, $\{j, \alpha(k)\}$, $\{j, \alpha(j)\}$, $\{\alpha(j), \alpha(k)\}$.
Combinatorial computations provide the following:
\begin{itemize}
\item $|x_k - x_j| \geq |y_k - y_j| + \mu_0 \e_k + \mu_0 \e_j$, where $\mu_0$ takes a values in $\{-1,1\}$ depending on $x_k, y_k, x_j, y_j$ using the following criteria: $\mu_0=1$ if $y_k \geq y_j$, $\mu_0=-1$ if $y_j \geq y_k$.  
\item $|x_k - x_{\alpha(j)}| \geq |y_k - y_{\alpha(j)}| + \mu_1 \e_k - \mu_1 \e_j$, where $\mu_1$ takes a values in $\{-1,1\}$ depending on $x_k, y_k, x_{\alpha(j)}, y_{\alpha(j)}$ using the following criteria: $\mu_1=1$ if $x_{\alpha(j)} < y_k$, $\mu_1=-1$ if $x_k < y_{\alpha(j)}$, and $\mu_1 \in \{-1,1\}$ is similarly chosen when $y_{\alpha(j)} \leq x_k$ or $y_k \leq x_{\alpha(j)}$, and depending on whether $\e_k \leq \e_j$ or not. 
\item $|x_j - x_{\alpha(k)}| \geq |y_j - y_{\alpha(k)}| - \mu_2 \e_k + \mu_2 \e_j$, where $\mu_2$ takes a values in $\{-1,1\}$ depending on $x_j, y_j, x_{\alpha(k)}, y_{\alpha(k)}$ using the following criteria: $\mu_2=1$ if $x_{\alpha(k)} > y_j$, $\mu_2=-1$ if $x_j > y_{\alpha(k)}$, and $\mu_2 \in \{-1,1 \}$ is similarly chosen when $y_{\alpha(k)} \geq x_j$ or $y_j \geq x_{\alpha(k)}$ and $\e_k \leq \e_j$, and depending on whether $\e_k \leq \e_j$ or not.
\item $|x_{\alpha(k)} - x_{\alpha(j)}| \geq |y_{\alpha(k)} - y_{\alpha(j)}| - \mu_3 \e_k - \mu_3 \e_j$, where $\mu_3$ takes a values in $\{-1,1\}$ depending on $x_{\alpha(k)}$, $y_{\alpha(k)}$, $x_{\alpha(j)}$, $y_{\alpha(j)}$ using the following criteria: 
$\mu_3=1$ if $x_{\alpha(k)} > x_{\alpha(j)}$, $\mu_3=-1$ if $y_{\alpha(k)} < y_{\alpha(j)}$, and $\mu_3 \in \{-1,1\}$ is similarly chosen when $y_{\alpha(k)} \geq y_{\alpha(j)}$ or $x_{\alpha(k)} \leq x_{\alpha(j)}$ and $\e_k \leq \e_j$, and depending on whether $\e_k \leq \e_j$ or not.
\item $|x_j - x_{\alpha(j)}| = |y_j - y_{\alpha(j)}| + 2 \e_j$.
\end{itemize}
All together we obtain: 
\begin{equation}  \label{eq1} 
\begin{split}
& |x_k - x_j| + |x_k - x_{\alpha(j)}|+ |x_j - x_{\alpha(k)}| + |x_{\alpha(k)} - x_{\alpha(j)}| + |x_j - x_{\alpha(j)}| \geq \\
& \geq |y_k - y_j| + |y_k - y_{\alpha(j)}|+ |y_j - y_{\alpha(k)}| + |y_{\alpha(k)} - y_{\alpha(j)}| + |y_j - y_{\alpha(j)}| + \\
&\e_k (\mu_0 + \mu_1 - \mu_2 - \mu_3) + \e_j (\mu_0 - \mu_1 + \mu_2 - \mu_3) + 2 \e_j.
\end{split}
\end{equation}
By construction we have $\mu_1 \leq \mu_0$, $\mu_3 \leq \mu_2$, $\mu_2 \leq \mu_0$ and $\mu_3 \leq \mu_1$. 
Hence the last line is $\geq 2 \e_j$.}
\end{enumerate}

The other combinations, like $j=k$ or $y_k < x_k$ and $y_j < x_j$, are simply reducible to one of these four cases, or analogous proof-arguments.
Proceeding inductively and comparing all pairs through the induction on $j$ and $k$ following the four cases, we can therefore observe two facts.
Firstly,  cases 2) and 3) show that whenever at least one of the pairs involved does not belong to $\dom(\alpha)$, the sum of the absolute values of symmetric differences of the considered combinations of stream $x$ and stream $y$ coincide.
Secondly, case 1) and 4) show that whenever the pairs both belong to $\dom(\alpha)$, then the combinations considered always reveal that the sum involving the values of stream $x$ is strictly larger than the ones referring to $y$. 
Hence, putting these two observations together, we obtain:   
\[
\sum_{k=1}^{H_N} \sum_{j=1}^{H_N} |x_k-x_j| \geq \sum_{k=1}^{H_N} \sum_{j=1}^{H_N} |y_k-y_j| + \frac{2}{5} \e \cdot |\dom(\alpha) \cap [1,H_N]| \cdot |\dom(\alpha) \cap [1,H_N]|,
\]
where the fraction $\frac{2}{5}$ comes from the counting in (\ref{eq1}), where it is shown that for a pair $j, k \in \dom(\alpha)$ the sum of the 5 combinations considered for $x$ overcome the analog sum for $y$ by a factor $2\e:=2\min\left\{\e_j, \e_k\right\}$, which gives that the total double sum for $x$ is larger than the total double sum for $y$ by $2\e$ over $5$. 
Therefore we obtain the desired property, as by the characteristics of $Y$ it holds $\e \geq 1$.
Hence we get:
\[
\begin{split}
-W(x)&= \liminf_{N \rightarrow \infty} \frac{1}{H_N^2} \sum_{k=1}^{H_N} \sum_{j=1}^{H_N} |x(k) - x(j)| \\
&\geq \liminf_{N \rightarrow \infty} \frac{1}{H_N^2}  \Big ( \sum_{k=1}^{H_N} \sum_{j=1}^{H_N} |y(k)-y(j)|+ \frac{2}{5} |\dom(\alpha) \cap [1,H_N]| \cdot |\dom(\alpha) \cap [1,H_N]| \Big )\\
&\geq \liminf_{N \rightarrow \infty} \frac{1}{H_N^2}  \Big ( \sum_{k=1}^{H_N} \sum_{j=1}^{H_N} |y(k)-y(j)| \Big)+ \frac{2}{5} \liminf_{N \rightarrow \infty} \frac{1}{H_N^2}  \Big (|\dom(\alpha) \cap [1,H_N]| \Big )^2\\
&\geq -W(y) + \frac{2}{5}\underline{d}^2(\dom(\alpha)).
\end{split}
\]
Since by assumption $\underline{d}(\dom(\alpha))>0$, we therefore get $W(x) < W(y)$ as desired. 
\end{proof}

Note that in the inequalities we have used the property $\liminf(a+b) \geq \liminf \;a + \liminf\; b$, $\liminf(a\cdot b) \geq \liminf \;a \cdot \liminf\; b$ and $\dom(\alpha)$ is strictly positive.
Therefore the proof cannot be adopted to work for s-IPD as well, and this is perfectly coherent with [\citet[Theorem 1]{dubey2020a}] where it is shown that the combination of s-IPD and AN is not representable for every non-trivial utility domain.

We conclude by stating an impossibility result, whose proof is delegated to a future work. 
Proposition \ref{P2} below shows that Proposition \ref{P1} cannot be extended if the utility domain becomes too complicated.
Specifically, Proposition \ref{P2} shows a limitation when $Y \subseteq [0,1]$ contains some infinite subsets with particular order type. 
More specifically, if the set $Y$ contains a pair of infinite sets one increasing and the other decreasing with well-defined minimum or maximum elements for each subset of $Y$, then s-APD and AN together are not representable.

\begin{proposition}\label{P2}
Let $Y \subseteq [0,1]$ contain as a subset  $\left\{ \frac{1}{2}+\frac{1}{k+1}: k \in \N \right\} \cup \left\{ \frac{1}{2}-\frac{1}{k+1}: k \in \N \right\}$. 
Then any SWO defined on  $X=Y^{\N}$ satisfying fixed step asymptotic Pigou-Dalton (s-APD) and anonymity (AN) axioms is not representable.
\end{proposition}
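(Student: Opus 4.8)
The plan is to assume, toward a contradiction, that some $W:X\rightarrow\R$ represents an SWO $\succsim$ satisfying s-APD and AN, and then to produce an uncountable family of pairwise disjoint nondegenerate order-intervals, which is impossible in $\R$ (a representable linear order admits at most countably many pairwise disjoint nondegenerate intervals, by the usual separability/Debreu argument). Fix a step $h\in\N$ witnessing s-APD and write $a_k:=\tfrac12+\tfrac1{k+1}$ and $b_k:=\tfrac12-\tfrac1{k+1}$, so that $a_k\downarrow\tfrac12$ and $b_k\uparrow\tfrac12$, with maximum $a_1=1$ of the decreasing family and minimum $b_1=0$ of the increasing family lying in $[0,1]$. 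The elementary gadget driving everything is the observation that a single Pigou--Dalton transfer moves a block one rung up the ladder:
\[ b_k+\e=b_{k+1},\qquad a_k-\e=a_{k+1},\qquad \text{where } \e=\tfrac1{k+1}-\tfrac1{k+2}>0. \]
Matching this against the first clause of s-APD shows that, on any paired coordinate, replacing the less equal pair $(b_k,a_k)$ by the more equal $(b_{k+1},a_{k+1})$ is a legal transfer; iterating (or choosing $\e=\tfrac1{k+1}-\tfrac1{k+m+1}$ directly) lets us climb any number of rungs in one step. Thus, whenever two streams agree off a set of positive density and one is obtained from the other by raising the level $k$ on a positive-density, within-block collection of pairs, s-APD yields a strict preference for the more equal stream. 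The extremal elements $a_1,b_1$ serve as fixed filler and as the base of each ladder, while the crucial feature is that the common limit $\tfrac12$ is \emph{never attained} by the prescribed values, so the within-block inequality $a_k-b_k=\tfrac2{k+1}$ can be driven arbitrarily close to, but never down to, $0$.

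Next I would handle the two structural constraints of s-APD by a block construction. Partition $\N$ into consecutive blocks $B_1,B_2,\dots$ of length $h$ and only ever pair coordinates lying in a common block, so that every pairing function used is automatically in $\Pi^s$ with step $h$; place the active (improvable) pairs along a fixed set of block indices of strictly positive density, so that $\underline d(\dom(\alpha))>0$ holds for all the transfers invoked. Where a transfer naturally links coordinates that are not in a common block, I would first use AN (finite transpositions, extended by transitivity) to relocate the relevant coordinates into a single block before applying s-APD, thereby reconciling the fixed-step requirement with the combinatorics of the transfer. With this machinery in place, I would build, for each $r$ in an uncountable index set $[0,1]$, two streams $u_r\prec w_r$: informally, $u_r$ approaches the equal state through the decreasing family $\{a_k\}$ and $w_r$ through the increasing family $\{b_k\}$ on the designated positive-density block set, with the position $r$ encoded in the surrounding blocks so that the family forms a chain with $w_r\preceq u_{r'}$ whenever $r<r'$. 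The strict relation $u_r\prec w_r$ comes from s-APD, and $w_r\preceq u_{r'}$ is arranged so that $W(w_r)\le W(u_{r'})$.

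The payoff is that the open intervals $\bigl(W(u_r),W(w_r)\bigr)$ are nonempty (since $u_r\prec w_r$) and pairwise disjoint (since $r<r'$ forces $W(w_r)\le W(u_{r'})$), giving uncountably many disjoint nondegenerate intervals in $\R$ — the desired contradiction with representability. \textbf{The main obstacle} I anticipate is the simultaneous bookkeeping of positive density and fixed step across an \emph{uncountable} chain of transfers: one must exhibit an uncountable inclusion-type chain of block-index sets whose successive differences retain strictly positive lower density (rather than the density-zero differences one gets from a naive Dedekind-cut encoding), and then verify that the resulting intervals are genuinely disjoint rather than merely ordered. This is precisely where the hypothesis is used essentially — the two monotone sequences approaching the common value $\tfrac12$ from opposite sides, together with the attained extrema $a_1,b_1$, are what allow the passage from $u_r$ to $w_r$ to be a positive-density s-APD improvement that no prescribed stream can interpolate, because $\tfrac12$ itself is not a realized value; showing rigorously that this non-attainment blocks interpolation (so the intervals cannot overlap) is the delicate point, and it is exactly the feature that fails for the finite domains of Proposition \ref{P1}.
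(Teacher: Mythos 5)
You should note first that the paper itself contains no proof of Proposition~\ref{P2}: the authors explicitly state that the proof ``is delegated to a future work.'' So there is no argument to compare against, and your proposal must stand on its own. As it stands, it is a strategy outline rather than a proof, and the step that carries all the weight is not merely unfinished but cannot be ``arranged'' the way you describe. Since the SWO is an arbitrary complete relation known only to satisfy s-APD and AN, every relation you invoke must be \emph{derived} from those axioms: the disjointness relations $w_r\preceq u_{r'}$ (for $r<r'$) have to be forced, in practice by exhibiting $u_{r'}$ as obtained from $w_r$ by a legal fixed-step, positive-density transfer, possibly composed with finitely many AN-transpositions and transitivity. Your sketch never defines $u_r$ and $w_r$, never verifies this chain condition, and the phrase ``is arranged so that $W(w_r)\le W(u_{r'})$'' conceals exactly the missing construction. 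Relatedly, your ``interpolation'' worry is a red herring: disjointness of the intervals $\bigl(W(u_r),W(w_r)\bigr)$ requires only the order relations among the streams you construct; whether other streams have $W$-values inside those intervals is irrelevant. The true role of the hypothesis is not that $\tfrac12$ is unattained but that the ladder has \emph{unboundedly many rungs} with increments shrinking to $0$, which is precisely what defeats the bounded Gini-type functional of Proposition~\ref{P1}.

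There is moreover a concrete obstruction showing that the construction as you describe it --- ladder pairs $(b_k,a_k)$ at fixed within-block positions, all strict preferences derived from transfers --- cannot produce the required family. Write $\rho_z(n)$ for the rung of stream $z$ in block $n$. A derived $u\prec w$ of this form requires $\rho_u\le\rho_w$ pointwise with strict inequality on a set $S$ of positive density, and $w_r\preceq u_{r'}$ likewise forces $\rho_{w_r}\le\rho_{u_{r'}}$ pointwise. Pick $n_r\in S_r$ for each $r$; by pigeonhole uncountably many $r$ share a common block $n_r=n^{*}$, and for those $r<r'$ one gets $\rho_{u_r}(n^{*})<\rho_{w_r}(n^{*})\le\rho_{u_{r'}}(n^{*})$, i.e.\ a strictly increasing map from an uncountable subset of $[0,1]$ into $\N$ --- impossible. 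So the double-arrow (jump) family you need does not exist in the transfer-only order, and any correct proof must exploit AN-permutations or non-ladder value configurations in an essential way that your proposal does not specify (you use AN only as auxiliary repositioning, which does not escape the pigeonhole since finite permutations affect only finitely many blocks per comparison). Finally, the obstacle you do flag --- a continuum-indexed chain of block sets with positive-density differences --- is actually the easy part: taking $A_r=\{n\in\N:\operatorname{frac}(n\sqrt{2})<r\}$, equidistribution gives $d(A_{r'}\setminus A_r)=r'-r>0$ for all $r<r'$. But such a chain yields only an order-embedding of $[0,1]$ into $\prec$, which is perfectly compatible with representability; it is the jump structure, not the chain, that must be built, and that is exactly where the proposal has no construction.
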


\section{Conclusions}\label{s5}
In this paper, we have proposed a new version of Gini Coefficient. 
This index represents social welfare orders satisfying generalized Pigou-Dalton transfer principle and anonymity on the space of infinite utility streams when individual agents' utility is assigned values from a finite set $Y\subset \R$.
Since an explicit formula for the index is described, it is useful for policy formulation.
We also show that when we consider more general set $Y$ (i.e., $Y$ having infinitely many elements of the type considered in Proposition \ref{P2}) real-valued representation is impossible. 
It is an open question for us to explore in future if social welfare function exists in case $Y(<)$ is a well-ordered infinite subset of real numbers.

\bibliographystyle{plainnat}
\setlength{\bibsep}{0pt}
\small{\bibliography{APAnonymity}}

\end{document}